\newcommand{\mathset}[1]{\ensuremath {\mathbb {#1}}}
\newcommand{\R}{\mathset{R}}
\newcommand{\etal}{\emph{et~al.}\xspace}
\DeclareMathOperator{\DT}{DT}
\DeclareMathOperator{\RNG}{RNG}
\DeclareMathOperator{\EMST}{EMST}
\theoremstyle{plain}
\newtheorem{theorem}{Theorem}[section]
\newtheorem{lemma}[theorem]{Lemma}
\newtheorem{observation}[theorem]{Observation}
\begin{document}

\title{Time-Space Trade-Offs for Computing Euclidean Minimum 
Spanning Trees\thanks{A preliminary version appeared as 
B.~Banyassady, L.~Barba, and W.~Mulzer. 
\emph{Time-Space Trade-Offs for Computing Euclidean Minimum 
Spanning Trees}. Proc~13th LATIN, pp. 108--119, 2018.
B.B.\@ and W.M.\@ were supported in part by 
DFG project MU/3501/2 and ERC StG 757609.
 L.B.\@ was supported by the ETH Postdoctoral Fellowship.}}

\author{Bahareh Banyassady\footnote{Zuse Institut Berlin, 
    Germany,
  \texttt{bahareh.banyassady@fu-berlin.de}} \and
Luis Barba\footnote{ETH Zurich, Zurich, Switzerland,
\texttt{luis.barba@inf.ethz.ch}} \and
Wolfgang Mulzer\footnote{Institut f\"ur Informatik,
Freie Universit\"at Berrlin, Germany,
\texttt{mulzer@inf.fu-berlin.de}} 
}

\maketitle          

\begin{abstract}
We present time-space trade-offs for computing 
the \emph{Euclidean minimum spanning tree} 
of a set $S$ of $n$ point-sites in the 
plane. More precisely, we assume that $S$ 
resides in a random-access memory that can 
only be read. The edges of the Euclidean minimum 
spanning tree $\EMST(S)$ have 
to be reported sequentially, and they cannot 
be accessed or modified afterwards. There is 
a parameter $s \in \{1, \dots, n\}$ so that 
the algorithm may use $O(s)$ cells of 
read-write memory (called the \emph{workspace}) 
for its computations. Our goal is to 
find an algorithm that has the best 
possible running time for any given $s$ 
between $1$ and $n$. 

We show how to compute $\EMST(S)$ in 
$O\big((n^3/s^2)\log s \big)$ time with $O(s)$ 
cells of workspace, giving a smooth trade-off 
between the two 
best known bounds $O(n^3)$ for $s = 1$ and 
$O(n \log n)$ for $s = n$. For this, we run 
Kruskal's algorithm on the \emph{relative 
neighborhood graph} (RNG) of $S$. It
is a classic fact that the minimum spanning 
tree of $\RNG(S)$ is exactly $\EMST(S)$.
To implement Kruskal's algorithm with $O(s)$ 
cells of workspace, we define \emph{$s$-nets},
a compact representation of planar graphs.
This allows us to efficiently maintain and 
update the components of the current 
minimum spanning forest as the edges are being
inserted.

\end{abstract}

\section{Introduction}

Given a set $S$ of $n$ point-sites in the plane, the 
\emph{Euclidean minimum spanning tree} 
of $S$, $\EMST(S)$, is 
the minimum spanning tree 
of the complete graph with vertex set $S$, where 
the weight of an edge between two point-sites is the 
Euclidean distance between them. 
The problem of computing $\EMST(S)$ efficiently 
constitutes a core question of computational geometry, 
and it is discussed in virtually every introductory 
course on the subject. There are several algorithms 
that find $\EMST(S)$ in $O(n \log n)$ time
and with $O(n)$ cells of space~\cite{deBergChvKrOv08,PreparataSh85}. 

Here, our goal is to design algorithms to compute 
$\EMST(S)$ in the \emph{limited-workspace model},
where only a limited number of memory cells are 
available for reading and writing during the 
execution of the algorithm~\cite{BanyassadyKoMu18}.
This model is of interest theoretically because it 
provides a trade-off between 
the running time and the space usage of an algorithm. 
It is also useful from a practical point of view, in developing software 
for portable devices and sensors where memory is
the limiting factor. A significant amount 
of research has focused on the design of 
algorithms under memory constraints. Much of this 
work dates from the 1970s, when memory was an 
expensive commodity. Even today, while this cost has dropped 
substantially, at the same time 
the amount of data has increased, and the size 
of some devices has been reduced dramatically. 
In particular, sensors and small devices, where 
larger memories are neither possible nor desirable, 
have proliferated in recent years. Moreover, even 
if a device is equipped with a large memory, 
it may still be preferable to limit the number 
of write operations. For example, writing to 
flash memory is slow, and it may 
reduce the lifetime of the memory. Additionally, 
if the input is stored on removable devices, 
write-access may not be allowed due to technical 
or security concerns.

There are many variants of the 
limited-workspace model~\cite{BanyassadyKoMu18}, but the general
outline is usually the same: the input resides 
in a read-only memory and cannot 
be modified directly by the algorithm. Instead, 
the algorithm may use a controlled amount of 
storage cells (usually called \emph{workspace})
that reside in a local memory and can be 
modified as needed to solve the problem.
Since the result of the computation may not fit 
in the local memory, the model provides
a write-only memory where the output is 
reported sequentially.
One noteworthy instance of the model is encountered in 
computational complexity theory, where the 
complexity class LOGSPACE consists of all 
decision problems that can be solved with 
a deterministic Turing machine that has access 
to two tapes~\cite{AroraBa09}. The first tape is read-only and 
contains the input, while the second tape represents 
the workspace and contains a logarithmic 
(in the input size) number of read-write bits. 
In other words, the second tape stores only a 
constant number of \emph{words} with a logarithmic
number of bits that can be used as counters or
as pointers to the input. Thus, the computational
model represented by LOGSPACE  is sometimes 
referred to as the \emph{constant-workspace} 
model~\cite{AsanoMuWa11,AsanoMuRoWa11}. 

More generally, we may allow the algorithm 
to use a workspace of $O(s)$ cells, 
for some parameter $s$, where a cell stores either 
an input item (such as a point coordinate), a pointer 
into the input structure (of logarithmic size in the 
input length), or a counter (with a logarithmic
number of bits). 
The goal is to design algorithms 
whose running time decreases as $s$ increases, 
and to obtain a smooth trade-off between 
workspace size and running time. 

\paragraph{Our results.} 
For computing the Euclidean minimum spanning tree
of $n$ given point-sites in the plane in the constant-workspace model, 
Asano~\etal~\cite{AsanoMuRoWa11} 
presented an algorithm that runs in  $O(n^3)$ time.  
We use their method as a starting 
point for a time-space trade-off. 
As a result, we obtain an algorithm that, for any given number 
$s \in \{1, \dots, n\}$ of workspace cells, 
computes the EMST in $O\big((n^3/s^2)\log s\big)$ 
time. This yields a smooth transition between the 
$O(n^3)$ time algorithm for $s = 1$ by
Asano~\etal~\cite{AsanoMuRoWa11} 
and the classic $O(n \log n)$ algorithm for
$s = n$~\cite{deBergChvKrOv08,PreparataSh85}.

As a main tool,
we define a compact representation of a
plane graph $G$, called the \emph{$s$-net}. The 
$s$-net consists of a ``dense'' set of 
$s$ edges in $G$ for which we remember the 
edge-face incidences. That is, for each edge $e$
in the $s$-net, we store the (at most two) faces of $G$
to which $e$ is incident. Furthermore, for each
face in $G$ that has at least one incident edge
in the $s$-net, we store the order in which the incident 
edges of the $s$-net appear. The density property 
guarantees that we cannot walk  
for more than $O(s)$ steps along a connected component of the boundary of a
face in $G$ without reaching an 
edge of the $s$-net. 
This turns out to be useful for an efficient limited-workspace
implementation of Kruskal's
MST-algorithm on a plane graph $G$.
Recall that in this algorithm, the edges of $G$ are inserted 
into an auxiliary graph by increasing order of weight. 
To insert a new edge $e$, we need to determine whether 
the endpoints of $e$ are in the same component 
of the current auxiliary graph. If $G$ is plane, 
this amounts to testing whether the
endpoints of $e$ are incident to the same face of the 
current auxiliary graph---precisely the task for which 
$s$-nets were created.
While the $s$-net is designed to speed up Kruskal's 
algorithm, this structure may be of independent interest, 
as it provides a compact way to represent plane graphs that 
may be useful in other problems.

\paragraph{Related work.} 
The study of constant-workspace algorithms in theoretical
computer science started with the
complexity class LOGSPACE~\cite{AroraBa09}. Since then, 
many classic problems were considered in this setting.
For example, there are a lot of relevant results on 
selection and sorting~\cite{MunroPa80,MunroRa96,PagterRa98,ChanMuRa14}. 
A long-standing algorithmic problem in graph
theory was eventually solved by Reingold~\cite{Reingold08},
who showed that the reachability between two 
vertices in an undirected graph can be decided in LOGSPACE. 
The model was made popular in computational geometry 
by Asano~\etal~\cite{AsanoMuRoWa11}, 
who presented several algorithms to compute classic
geometric structures in the constant-workspace 
model (see the recent survey~\cite{BanyassadyKoMu18}). 
Time-space trade-offs for many of these structures 
were presented in subsequent years~\cite{AsanoKi13,
BarbaKoLaSaSi15,KormanMuvReRoSeSt17,BanyassadyKoMuvReRoSeSt18, 
BahooBaBoDuMu19,AronovKoPrvReRo16,BarbaKoLaSi14,
DarwishEl14,HarPeled16,AhnBaOhSi17,ElmasryKa16}, with the notable exception 
of the EMST. This is  finally addressed here.

\section{Preliminaries and Notation}

We recall the basic definition and some 
properties of the Euclidean minimum
spanning tree, and we briefly review 
some known algorithms for computing it,
both in the classic setting and in the 
constant-workspace model.
Furthermore, we recall the definition 
of the \emph{relative neighborhood graph} (RNG), 
a basic proximity structure defined on planar point sets,
and we discuss the relationship between RNGs and Euclidean 
minimum spanning trees.

\subsection{Euclidean Minimum Spanning Trees}\label{sec:def-emst}

Let $S = \{p_1, \dots, p_n\} \subset \R^2$ be 
a set of $n$ point-sites in the plane, from now on referred to 
as sites. 
We assume that $S$ is in \emph{general position}, 
i.e., no three sites lie on a common line, 
no four sites lie on a common circle, and the 
pairwise distances between the sites are
all distinct. Let $G_S$ be the complete weighted 
graph with vertex set $S$, where the edges are 
weighted with the Euclidean distance between 
their endpoints. A minimum spanning
tree of $G_S$ is called a \emph{Euclidean minimum
spanning tree} of $S$, and it is denoted by $\EMST(S)$,
see Figure~\ref{fig:trees}. 

\begin{figure}[t]
	\centering
	\subcaptionbox{\label{fig:spanning-tree}}
	{\includegraphics[page=1]{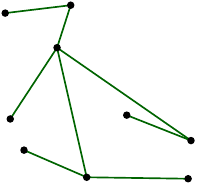}}
	\hspace{1in}
	\subcaptionbox{\label{fig:emst}}{\includegraphics[page=2]{Spanning-Tree}}
	\caption{A set $S$ of sites in the plane, and (a) a spanning tree 
		of the complete graph with vertex set $S$ and (b) the Euclidean 
		minimum spanning tree of $S$.}
	\label{fig:trees}
\end{figure}

Under our general position assumption, it is known that $\EMST(S)$ is unique
(see, e.g.,~\cite{CormenLeRiSt09}). 
Given $S$, we would like to report the edges of $\EMST(S)$
in any order, so that each edge 
is listed exactly once.

\paragraph{A Classic Algorithm.}
We recall the classic algorithm by Kruskal~\cite{CormenLeRiSt09}:
we start with an empty forest $F$, and we consider 
the edges of $G_S$ one by one, by increasing weight. 
In each step, we insert the current edge $e$ into $F$ if and 
only if there is no path in $F$ between the endpoints of $e$;
see Figure~\ref{fig:emst-e1-e2}. 
After all edges of $G_S$ have been considered,
the final graph $F$ is exactly $\EMST(S)$. 

\begin{figure}[ht]
	\centering
	\subcaptionbox{}{\includegraphics[page=3]{Spanning-Tree}}\hspace{1in}
	\subcaptionbox{}{\includegraphics[page=4]{Spanning-Tree}}
	\caption{Two steps in Kruskal's algorithm for computing
		the EMST for a set of sites in the plane.
		(a) In this step, the algorithm adds $e_1$ to $F$, 
		since there is no path between $p$ and $q$ in $F$.
		(b) In the next step, $e_2$ is discarded,
		since $q$ and $r$ lie in the same component of $F$.}
	\label{fig:emst-e1-e2}
\end{figure}

During the above procedure, using a \emph{disjoint set-union
structure}, we keep track of the components
of $F$ so that we can determine if 
there is a path in $F$ between the two endpoints of
the next edge $e$~\cite{CormenLeRiSt09}.
With an efficient implementation of the disjoint set-union
structure, the time for inserting the edges into $F$ is dominated by
the time for sorting the edges with their weight.
This gives a running time of $O(n^2\log n)$ with $O(n)$
cells of workspace. 

The running time can be improved as follows:
the \emph{Delaunay triangulation} of $S$, $\DT(S)$,
is the triangulation of $S$ in which three sites
$p, q, r$ form a triangle if and only if the
disk with $p$, $q$, and $r$ on the boundary contains
no other sites from $S$ in its interior~\cite{deBergChvKrOv08}.
Under our general position 
assumption, this defines a unique plane triangulation
of $S$ which is a supergraph of $\EMST(S)$; see 
Figure~\ref{fig:dt-emst}~\cite{deBergChvKrOv08}.
Thus, $\EMST(S)$ is the minimum spanning tree of $\DT(S)$,
and it suffices to consider the $O(n)$ edges of $\DT(S)$ instead
of the $O(n^2)$ edges of $G_S$. 
Then, Kruskal's algorithm 
runs in $O(n\log n)$ time, when $O(n)$ cells of 
workspace are available~\cite{deBergChvKrOv08}.

\begin{figure}[ht]
	\centering
	\includegraphics[page=5]{Spanning-Tree}
	\caption{The Delaunay triangulation $\DT(S)$ and the 
	Euclidean minimum spanning tree $\EMST(S)$ for a
	planar point set $S$. The dashed edges belong to 
	$\EMST(S)$.}\label{fig:dt-emst}
\end{figure}

\paragraph{The Constant-Workspace Algorithm.}
Asano~\etal~\cite{AsanoMuRoWa11} presented an algorithm
that reports the edges of $\EMST(S)$ in $O(n^3)$ time with 
$O(1)$ cells of workspace. 
Like the classic method, their algorithm uses the fact that
$\EMST(S)$ is a subgraph of $\DT(S)$. 
First, Asano~\etal~show that there 
exists a constant-workspace algorithm 
that solves the following task in $O(n)$ time: 
given an edge $pq$ of $\DT(S)$, find the next 
edge $pr$ of $\DT(S)$ that is incident to $p$ 
after $pq$ in clockwise direction.
Using the fact that for each $p \in S$,
the edge between $p$ and its nearest neighbor
in $S \setminus \{p\}$ belongs to $\DT(S)$,
this gives an algorithm
that reports the edges of 
$\DT(S)$, one by one, in an arbitrary order, 
in $O(n)$ time per edge. We will not
describe the details 
here, but we present an analogous result for relative 
neighborhood graphs in 
Section~\ref{sec:computing-rng}.

Then, the algorithm of Asano~\etal~to
list the edges of $\EMST(S)$ proceeds as follows:
we run the constant-workspace algorithm
that enumerates the edges of $\DT(S)$.
Every time a new edge $e$ of $\DT(S)$
is reported, 
we test if $e$ is in $\EMST(S)$. If so,
we output $e$; otherwise, we discard it.
To perform this test, we consider the
subgraph $\DT_{<e}$ of $\DT(S)$
that contains all the edges of length 
less than $|e|$, where $|e|$ denotes the
(Euclidean) length of $e$. 
By the cut-property of minimum spanning trees,
it follows that $e$ is \emph{not} in $\EMST(S)$
if and only if 
the endpoints $p$ and $q$ of $e$ lie 
in the same connected component of 
$\DT_{<e}$. 
Since $\DT(S)$ is plane, this means that
$e$ is not in $\EMST(S)$ 
if and only if $p$
and $q$ lie on a common connected component of 
the boundary of the face of $\DT_{<e}$ that contains
$e$.
In other words, $e \notin \EMST(S)$ if and only if we 
encounter $q$ by walking 
from $p$ along the connected component 
of the boundary of the face of $\DT_{<e}$ that contains $e$;
see Figure~\ref{fig:dt<e}.

\begin{figure}[ht]
\centering
\includegraphics{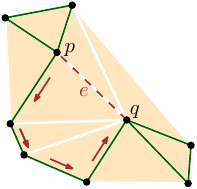}
\caption{The subgraph $\DT_{<e}$ for a planar point set $S$ 
and an edge $e=pq$ of $\DT(S)$. To decide if $e$ belongs to $\EMST(S)$,
we check if $p$ to $q$ are in the same connected component of $\DT_{<e}$.
For this, we walk along the boundary of the face of $\DT_{<e}$ that 
contains $e$, starting from $p$.}\label{fig:dt<e}
\end{figure}

To perform one step of this walk,  
we use the above-mentioned subroutine due to Asano~\etal that receives an
edge of $\DT(S)$ and finds the next clockwise edge of
$\DT(S)$ 
using $O(n)$ time and $O(1)$ cells
of workspace. 
We start with the edge $e = pq$ and we repeatedly call
the subroutine until we find the first
Delaunay edge $pr$ of length less than $|e|$
(if no such edge exists, then $e$ belongs to $\EMST(S)$). 
Then, we repeatedly call the subroutine, starting
with the reverse edge $rp$, until we encounter an
edge $rs$ with length less than $|e|$. We continue until
either (i) we encounter $q$, in which case $e$ does
not belong to $\EMST(S)$; or (ii) the subroutine 
produces the edge $e$, which means that we have
traversed the complete connected component of the 
face boundary without seeing
$q$, in which case $e$ belongs to $\EMST(S)$.\footnote{Note that
it does not suffice to stop the walk once we come back to $p$,
because several edges that are incident to $p$ might appear on the
boundary of the relevant face.}
During this walk, each edge of $\DT(S)$ 
is generated at most twice, at most once for each endpoint. 
Thus, we need $O(n^2)$ time to decide if 
an edge $e$ of $\DT(S)$ is in $\EMST(S)$, with
$O(1)$ cells of workspace.

Since $\DT(S)$ has $O(n)$ edges, and since it takes
$O(n^2)$ time to decide membership in 
$\EMST(S)$, the total time to find all the edges of
$\EMST(S)$ is $O(n^3)$. 
The overhead for computing the edges of
$\DT(S)$ in the outer loop is $O(n^2)$, which is 
negligible compared to the remainder of the algorithm. 
The workspace is constant.  
We can also report
the edges of $\EMST(S)$ by increasing length: we repeatedly list 
all edges of $\DT(S)$, and each time we find the shortest edge 
$e \in \DT(S)$ whose membership in $\EMST(S)$ has not yet been checked, we
apply our test to $e$. 
Now, the overhead for the outer loop is $O(n^3)$ instead of
$O(n^2)$, 
without any effect on
the total asymptotic running time.

\subsection{Relative Neighborhood Graphs}\label{sec:def-rng}

The \emph{relative neighborhood graph} is a geometric
structure that 
``lies between'' the Euclidean minimum spanning tree
and the Delaunay triangulation.
For two sites $p,q \in S$, we define the \emph{lens} of $p$ and $q$
as the intersection of the disk centered at $p$ with radius $|pq|$
and the disk centered at $q$ with radius $|pq|$,
where $|\cdot|$ denotes the Euclidean distance.
The lens of $p$ and $q$ is called \emph{empty} if 
it contains no sites of $S \setminus \{p, q\}$ in
its interior. In other words, 
the two sites $p$ and $q$ have the \emph{empty lens} property if
there is no site $r \in S \setminus \{p, q\}$ such that both 
$|pr|$ and $|qr|$ are shorter than $|pq|$;
see Figure~\ref{fig:lens}.

\begin{figure}[ht]
\centering
\includegraphics[page=1]{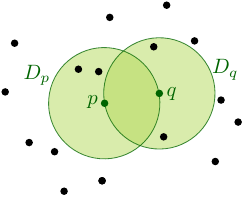}
\caption{A set $S$ of sites and two sites $p, q \in S$. 
The disks $D_p$ and $D_q$ have radius $|pq|$ and are centered at 
$p$ and $q$, respectively. The two sites $p$ and $q$ satisfy the
empty lens property since $D_p \cap D_q$ is empty of 
other sites of $S$.}\label{fig:lens}
\end{figure}

The \emph{relative neighborhood graph} $\RNG(S)$ of $S$ is the undirected 
graph with vertex set $S$ obtained by connecting two sites 
$p, q \in S$ with an edge if and only if the lens of $p$ and $q$ 
is empty~\cite{Toussaint80}. 
One can show that a plane embedding of $\RNG(S)$
is obtained by drawing the edges as straight line
segments between the corresponding sites in $S$; 
see Figure~\ref{fig:emst-RNG(S)}. 
By definition, $\RNG(S)$ is a subgraph of 
$\DT(S)$.\footnote{If an edge $e = pq$ is 
in $\RNG(S)$, then the lens of $p$ and $q$ is 
empty, which also means that the smallest 
disk with both $p$ and $q$ on the boundary is empty 
of other sites of $S$. Thus, $e$ belongs to $\DT(S)$.}
Furthermore, it is well-known that $\EMST(S)$ is a subgraph of  
$\RNG(S)$~\cite{deBergChvKrOv08}. In particular, this 
implies that $\RNG(S)$ is connected; see Figure~\ref{fig:emst-RNG-DT}. 
Each vertex in $\RNG(S)$ has at most 
six neighbors, so $\RNG(S)$ has bounded
degree and $O(n)$ edges.
We will denote the number of those edges by $m$.
Given $S$, we can list the edges of $\RNG(S)$
in $O(n \log n)$ time using $O(n)$ cells of
workspace~\cite{Toussaint80,JaromczykTo92,MitchellMu18}.

\begin{figure}[t]
\centering
\includegraphics[page=2]{RNG}
\caption{An edge $pq$ in $\RNG(S)$.
The lens of $p$ and $q$ is empty.}\label{fig:emst-RNG(S)}
\end{figure}

\begin{figure}[ht]
	\centering
	\includegraphics[page=3]{RNG}
	\caption{An illustration of the fact 
		$\EMST(S) \subseteq \RNG(S) \subseteq \DT(S)$.
		The dashed black edges belong to $\EMST(S)$ and are
		a subset of the green edges which represent $\RNG(S)$.
		All these edges form a subset of the edges of the underlying graph $\DT(S)$.}
	\label{fig:emst-RNG-DT}
\end{figure}

Thus, we can compute 
$\EMST(S)$ with the algorithm of Kruskal using 
the edges of $\RNG(S)$ instead of $\DT(S)$.
Since both $\RNG(S)$ and $\DT(S)$ have $O(n)$ edges,
this does not improve the running time of Kruskal's algorithm 
in the classic setting where $O(n)$ cells of workspace are available.
However, since $\RNG(S)$ (unlike $\DT(S)$) has bounded 
degree, it turns out to be the superior choice for the 
limited-workspace model.

\begin{figure}[ht]
	\centering
	\includegraphics{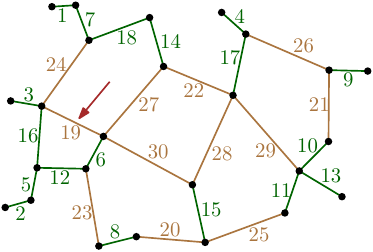}
	\caption{The $\RNG$ for a set $S$ of sites. The labels represent the indices of
		the edges in the sorted sequence $E_R$. The subgraph $\RNG_{19}$ is shown
		in green. The edge $e_{19}$ does not belong to $\EMST(S)$ since its endpoints 
		lie in the same component of $\RNG_{19}$.}
	\label{fig:kruskal-rng}
\end{figure}

We define $E_R = e_1, \dots, e_m$ to be the sorted 
sequence of edges in $\RNG(S)$, in increasing order
of length. For $i \in \{1, \dots, m\}$, we define 
$\RNG_i$ to be the subgraph of $\RNG(S)$ with 
vertex set $S$ and edge set $\{e_1, \dots, e_{i-1}\}$.
Thus, to check if $e_i$ belongs to $\EMST(S)$, the algorithm
by Kruskal checks if the endpoints of $e_i$ lie on the
same component of $\RNG_i$; see Figure~\ref{fig:kruskal-rng}.

In our algorithms, we represent each edge $e_i \in E_R$ by 
two directed \emph{half-edges}. The two
half-edges are oriented in opposite directions such that the 
face incident to a half-edge lies on its left. We call the endpoints of 
a half-edge the \emph{head} and the \emph{tail} such that the
half-edge is directed from the tail endpoint to the head endpoint.
Furthermore, directed half-edges will be denoted as
$\overrightarrow{e}$ and undirected edges as $e$;
see Figure~\ref{fig:head-tail} for an illustration.

\begin{figure}[ht]
	\centering
	\includegraphics{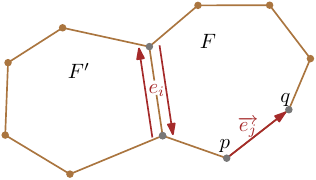}
	\caption{A schematic drawing of the faces $F, F'$ of $\RNG(S)$. 
		The two half-edges that correspond to the edge $e_i$ are 
		oriented such that the face
		incident to each lies on its respective left.
		The sites $p$ and $q$ are the head and the tail 
		endpoints of the half-edge 
		$\protect\overrightarrow{e_j} = \protect\overrightarrow{pq}$, respectively.}
	\label{fig:head-tail}
\end{figure}

Using the concept of half-edges, we define the \emph{face-cycle} in a 
planar graph. For $i \in \{1, \dots, m\}$, 
a \emph{face-cycle} in $\RNG_i$ is the circular sequence of consecutive
half-edges such that (i) they bound either a face in $\RNG_i$ or the 
outer face 
in a connected component of $\RNG_i$\footnote{Since $\RNG_i$ has several
connected components, to define face-cycles of the outer face,
we have to consider the outer face of each connected component 
individually.}; and
(ii) every two consecutive half-edges $e$ and $e'$ in a face-cycle
share an endpoint which is the head
vertex of $e$ and the tail of $e'$.

The definition implies that all the
half-edges in a face-cycle are oriented in the same direction and
the face (or the outer face) incident to the half-edges lies on their left.
Note that every half-edge lies on only one face-cycle; however, a
site of $S$ might be on several face-cycles; see Figure~\ref{fig:emst-face-cycle}.
The \emph{partial relative neighborhood graph} $\RNG_i$ can be represented as a collection of face-cycles.

\begin{figure}[ht]
	\centering
	\includegraphics{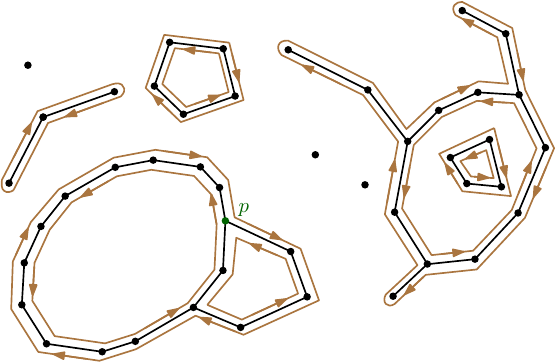}
	\caption{A schematic drawing of $\RNG_i$ for a planar set $S$ of sites.
		The edges are shown in black. The face-cycles of $\RNG_i$ are in beige. 
		The half-edges of each face-cycle are directed
		according to the arrows on the corresponding cycle. 
		The site $p \in S$ is on three face-cycles of $\RNG_i$. 
		Each of the six half-edges
		incident to $p$ is only on one face-cycle.}
	\label{fig:emst-face-cycle}
\end{figure}

Let $j\geq i\geq 1$. For a half-edge $\overrightarrow{e_j}$ with head $q$, we define the \emph{predecessor} and the \emph{successor}
of $\overrightarrow{e_j}$ in $\RNG_i$ as follows:
the predecessor $\text{pre}(\overrightarrow{e_j})$ of 
$\overrightarrow{e_j}$ is the half-edge in $\RNG_i$ which has
$q$ as its head and is the first half-edge encountered in a 
counterclockwise sweep from $\overrightarrow{e_j}$ around $q$.
The successor $\text{suc}(\overrightarrow{s_j})$ of $\overrightarrow{e_j}$
is the half-edge in $\RNG_i$ which has
$q$ as its tail and is the first half-edge encountered in a 
clockwise sweep from $\overrightarrow{e_j}$ around $q$;
see Figure~\ref{fig:emst-predecessor-successor} for an illustration. 
Note that, if there is no edge incident to $q$ in $\RNG_i$, we set both
the predecessor and the successor to \emph{Null}.

Let $i > j\geq 1$. For the half-edge $\overrightarrow{e_j}$ in $\RNG_i$
that lies on a face-cycle $F$, we define the \emph{next} edge of
$\overrightarrow{e_j}$ on $F$ as the half-edge
on $F$ whose tail is the head of $\overrightarrow{e_j}$.
Note that
the next edge of a half-edge $\overrightarrow{e_j}$ is defined with 
respect to each diagram $\RNG_i$ with $i > j$ and thus 
$\overrightarrow{e_j} \in \RNG_i$, whereas
the predecessor and successor of $\overrightarrow{e_j}$ 
are defined with respect to each diagram $\RNG_i$ with $i \leq j$, 
meaning that $\overrightarrow{e_j} \not \in \RNG_i$.

\begin{figure}[ht]
	\centering
	\includegraphics[page=2]{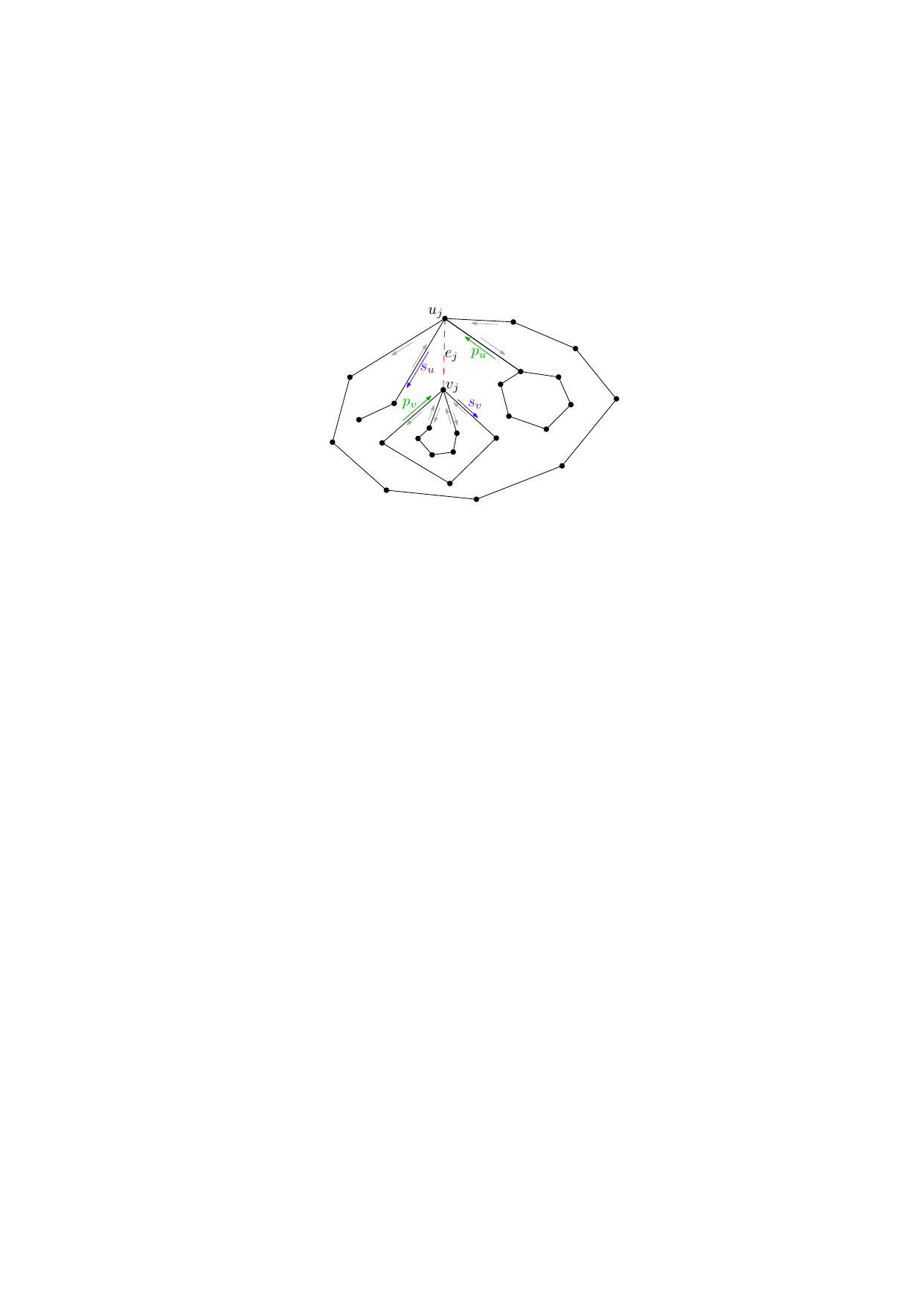}
	\caption{A schematic drawing of $\RNG_i$ and  
             the half-edges $\protect\overrightarrow{e_j}$ with 
             head $q$ and $\protect\overrightarrow{e_{j'}}$ with 
             head $q'$, for $j, j' \geq i \geq 1$. 
             The predecessor and successor of
             $\protect\overrightarrow{e_j}$
             are $\text{pre}(\protect\overrightarrow{e_j})$ and 
             $\text{suc}(\protect\overrightarrow{e_j})$, respectively.
             The predecessor and successor of
             $\protect\overrightarrow{e_{j'}}$             
             are $\text{pre}(\protect\overrightarrow{e_{j'}})$ and 
            $\text{suc}(\protect\overrightarrow{e_{j'}})$, respectively.}\label{fig:emst-predecessor-successor}
\end{figure}

\section{Computing the Relative Neighborhood Graph}
\label{sec:computing-rng}

For the given set $S = \{p_1, \dots, p_n \}$ of sites, 
our first goal is to compute the edges of $\RNG(S)$ in the 
limited-workspace model.
We first present an algorithm for
listing the edges of $\RNG(S)$ in an arbitrary order,
using $O(s)$ cells of workspace.
Then, we extend the algorithm so that it outputs
the edges in sorted order according to their lengths. 
Our method is inspired by the time-space trade-off 
for Voronoi diagrams by 
Banyassady~\emph{et al.}~\cite{BanyassadyKoMuvReRoSeSt18}.

\subsection{All the Incident Edges to Some Sites}

The idea is to subdivide $S$ into \emph{batches} 
of $s$ sites, and to compute all the edges incident to the sites 
in one batch simultaneously.
In the following lemma, we explain how to process one batch
using $O(s)$ cells of workspace. 
This lemma is the main reason why we prefer to use $\RNG(S)$ 
instead of $\DT(S)$, the choice of Asano 
\emph{et al.}~\cite{AsanoMuRoWa11}.
More precisely, in $\DT(S)$, there may be sites of high degree, so 
that we cannot guarantee that all edges incident to the sites of 
a single batch can be found in the desired time.

\begin{lemma}\label{lem:6neighbors}
	Let $S$ be a planar set of $n$ point-sites in general position,
	stored in a read-only array. Given a set $Q \subseteq S$ of  
	$s$ sites, we can compute for each $p \in Q$ 
	the neighbors of $p$ in $\RNG(S)$ (for each $p$, there
	are at most six neighbors)
	in total time $O(n \log s)$ and using
	$O(s)$ cells of workspace.
\end{lemma}

\begin{proof}
	The algorithm has two phases. In the first phase, for each 
	$p \in Q$, we find a set containing the neighbors of $p$ 
    in $\RNG(S)$. This superset has size at most six.
	In the second phase, we check for each $p \in Q$ 
    which of these \emph{candidate neighbors}
	are the actual neighbors of $p$ in $\RNG(S)$.
	
	The first phase proceeds in $\lceil n/s \rceil$ \emph{steps}.
	In each step, we process a \emph{batch} of $s$ sites of 
	$S = R_1\cup \dots \cup R_{\lceil n/s\rceil}$, and we produce 
    at most six candidate neighbors
	for each $p \in Q$. In the first step, we take the 
	first batch $R_1 \subseteq S$ of $s$ sites, and we compute 
    $\RNG(Q \cup R_1)$. 
	Because $|Q \cup R_1| \leq 2s$, we can do this in $O(s \log s)$ 
    time using known algorithms~\cite{Toussaint80,JaromczykTo92,MitchellMu18}. 
    For each $p \in Q$, we remember 
	the neighbors of $p$ in $\RNG(Q \cup R_1)$
	(there are at most six neighbors). Notice 
    that if for a pair 
	$p \in Q, r \in R_1$, the edge $pr$ is not in 
    $\RNG(Q \cup R_1)$, then the lens of 
	$p$ and $r$ is non-empty. 
	This also means that $pr$ is not an edge of $\RNG(S)$. 
	Let $N_1$ be the set containing all neighbors in $\RNG(Q \cup R_1)$ 
    of all sites in $Q$.
	Storing $N_1$, the set of candidate neighbors, 
    requires $O(s)$ cells of workspace. 
	
	Then, in each step $j = 2, \dots, O(n/s)$, we take the next batch $R_j \subseteq S$
	of $s$ sites, and we compute $\RNG(Q \cup R_j\cup N_{j-1})$ in $O(s\log s)$ time 
	using $O(s)$ cells of workspace. For each $p\in Q$, we store the set 
	of 
	neighbors of $p$ in this computed graph (this set has size at most
	six). Additionally, we let $N_j$ be the set 
	containing all neighbors in $\RNG(Q \cup R_j\cup N_{j-1})$ of all sites in $Q$.
	Note that $N_j$, the set of candidate neighbors, consists of $O(s)$ sites as 
	each site in $Q$ has a degree of at most six in the computed graph. At this step, 
	we do not need to store $N_{j-1}$ anymore.
	
	After $\lceil n/s \rceil$ steps we are left with at most six 
	candidate neighbors for each site in $Q$. As mentioned above, for a pair $p\in Q, r\in S$, 
	if $r$ is not among the candidate neighbors of $p$, then, at some point in the construction, 
	there was an \emph{obstructing} site inside the lens of $p$ and $r$.
	Therefore, only the candidate neighbors can define edges of 
	$\RNG(S)$, but not necessarily all of them.
	See Figure~\ref{fig:extra-neighbors} for an example.
	
	\begin{figure}[ht]
		\centering
		\includegraphics[page=1]{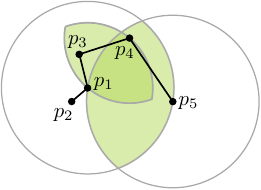}
		\caption{For $S=\{p_1, \dots, p_5\}$, 
			the set of neighbors of $p_1$ in $\RNG(S)$ is $\{p_2, p_3\}$. 
			Suppose that $p_3$ and $p_4$ are processed in some steps before $p_5$.
			After processing $p_3$ and $p_4$, the site $p_4$ is not 
            a candidate neighbor of $p_1$,
			because $p_3$ lies in their lens.
			This results in $p_5$ becoming a candidate neighbor of $p_1$ in one
			of the following steps. Since $p_4$ is the only site in the
			lens of $p_1$ and $p_5$, the site $p_5$ will  remain 
            as a candidate neighbor of $p_1$.
		}
		\label{fig:extra-neighbors}
	\end{figure}
	
	In the second phase, to obtain the edges of $\RNG(S)$ incident 
	to the sites in $Q$, we go again through the entire set 
	$S = R_1\cup \dots \cup R_{\lceil n/s\rceil}$ in batches of 
	size $s$: in the first step, we start with all the sites in $Q$ 
	and their candidate neighbors in $N_{\lceil n/s \rceil}$, and we 
	construct $\RNG(Q \cup R_1 \cup N_{\lceil n/s \rceil})$.
	For each $p \in Q$ and for each candidate neighbor $r$ of $p$ 
	in $N_{\lceil n/s \rceil}$, we check if $r$ is still 
	a neighbor of $p$ in this computed graph. 
	If not, we remove $r$ from the candidate neighbors of $p$.
	We denote the pruned set of candidate neighbors of all the
	sites in $Q$ by $N'_1$. The candidate neighbors in 
	$N_{\lceil n/s \rceil}$ for which there is an obstructing site in $R_1$
	will not appear in $N'_1$.
	
	Then, in each step $j= 2, \dots, O(n/s)$, 
	we construct the graph $\RNG(Q \cup R_1 \cup N'_{j-1})$.
	Again, for each site $p \in Q$, we remove its candidate neighbors in
    $N'_{j-1}$ that are no longer neighbors of $p$ in the computed graph.
	We denote 
	the pruned set of candidate neighbors of all the sites in $Q$ by $N'_{j}$.
	In this step, we do not need to store $N'_{j-1}$ anymore.
	After going through all the batches, the candidates that have survived 
	define the edges of $\RNG(S)$ incident to the sites in $Q$; see
	Figure~\ref{fig:s-sites-rng}. Note
	that in all the steps, $N'_j$ contains at most six candidate neighbors
    for 
	each site of $Q$, and thus, its size is $O(s)$.
	
	\begin{figure}[ht]
		\centering
		\includegraphics[page=2]{Computing-RNG}
		\caption{For a set of sites $S$ and $Q=\{p_1, \dots, p_5 \}$, the neighbors
			in $\RNG(S)$ of all the sites in $Q$ are generated.}
		\label{fig:s-sites-rng}
	\end{figure}
	
	Since the algorithm takes $O(s\log s)$ time per step, and since the
	number of steps is $2 \cdot \lceil n/s\rceil$, the total
	running time of the algorithm is $O(n\log s)$.
	The space requirement for storing the candidate neighbors as well as
	the intermediate $\RNG$s is $O(s)$ cells of workspace.
\end{proof}

\subsection{Finding All the Edges of RNG}

Through repeated application of Lemma~\ref{lem:6neighbors},
we can compute all the edges of $\RNG(S)$,
in some arbitrary order, using a workspace of
$O(s)$ cells.

\begin{theorem}\label{thm:rng}
	Suppose we are given a set of $n$ point-sites $S=\{p_1, \dots, p_n \}$
	in the plane
	in general position, stored in a read-only array. Let $s$ be a parameter 
	in $\{ 1, \dots, n \}$. We can compute the edges of $\RNG(S)$ in total time 
	$O \big( (n^2/s) \log s \big)$, using $O(s)$ cells of workspace.
\end{theorem}

\begin{proof}
	We take the set $Q$ of the first $s$ sites of $S$, and we apply 
	Lemma~\ref{lem:6neighbors} on $Q$ to find all the neighbors in $\RNG(S)$ 
	of all the sites in $Q$. Whenever we find a neighbor $p_j$ of a site $p_i$ 
	in $\RNG(S)$, we report the edge $p_ip_j$ only if $i<j$. This guarantees that 
	the edge $p_ip_j$ of $\RNG(S)$ is reported only once. 
	Then,  we take the next batch of $s$ sites of $S$ and
	repeat the same procedure. We continue until all the sites
	in $S$ are processed, i.e., $O(n/s)$ times; see Figure~\ref{fig:comp-rng}.
	
	Lemma~\ref{lem:6neighbors} guarantees that all the reported edges belong to 
	$\RNG(S)$ and all the edges of $\RNG(S)$ are reported exactly once. 
	Regarding the running time of the algorithm, $O(n/s)$ invocations of 
	Lemma~\ref{lem:6neighbors} take a total of $O \big( (n^2/s) \log s \big)$ time.
	The space requirement is immediate.
\end{proof}

\begin{figure}[ht]
	\centering
	\includegraphics[page=3]{Computing-RNG}
	\caption{The RNG for the set of sites $S$ is generated
		by processing sites of $S$ in batches of $s$ sites.}
	\label{fig:comp-rng}
\end{figure}

\subsection{Edges of RNG in Sorted Order of Length}

In the following lemma, we use a technique that
is taken from the work of Chan and Chen~\cite{ChanCh07}
to produce the edges of $\RNG(S)$ in sorted order of length.
Note that having edges of $\RNG(S)$ in sorted order is necessary only in 
the algorithm in Section~\ref{sec:s-net-emst}, 
where we introduce the $s$-net structure.
More precisely, in order to update the $s$-net efficiently, we must
add the edges of $\RNG(S)$ one by one in their sorted order.
Nevertheless, this procedure is also exploited in our simple algorithm 
in Section~\ref{sec:simple-emst} with the aim of reporting edges of 
$\EMST(S)$ in the sorted order of their length instead of in an arbitrary order.

\begin{lemma}\label{lem:gen_edges}
	Let $S$ be a planar set of $n$ point-sites in general position stored in
	a read-only array. 
	Let $s \in \{1, \dots, n\}$ be a parameter.
	Let $E_R = e_1, e_2, \dots, e_m$ be the sequence of edges 
	in $\RNG(S)$ sorted by increasing length. Let $i \geq 1$.
	Given $e_{i-1}$ (or $\perp$, if $i = 1$), we can find the next $s$ edges 
	$e_{i}, \dots, e_{i + s - 1}$ in $E_R$ using $O\big( (n^2/s) \log s \big)$ 
	time and $O(s)$ cells of workspace.\footnote{Naturally, if $i + s - 1 > m$,
		we report the edges $e_i, \dots, e_m$.}
\end{lemma}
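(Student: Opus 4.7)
The plan is to enumerate candidate edges by sweeping over $V$ in batches, apply Lemma~\ref{lem:6neighbors} to each batch, and maintain a running list of the $s$ best edges found so far.

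More concretely, partition $V$ arbitrarily into $\lceil n/s \rceil$ batches $Q_1, \dots, Q_{\lceil n/s \rceil}$ of $s$ sites each. For each batch $Q_j$, I would invoke Lemma~\ref{lem:6neighbors} to compute, for every $u \in Q_j$, its (at most six) neighbors in $\RNG(V)$, in time $O(n \log s)$ and workspace $O(s)$. This yields a set $E_j$ of at most $6s$ edges of $\RNG(V)$, namely all edges with at least one endpoint in $Q_j$. From $E_j$, I discard every edge whose length is at most $|e_{i-1}|$ (if $i = 1$, no filter is applied). Throughout the sweep, I keep a sorted list $L$ of size at most $s$, storing the $s$ shortest qualifying edges discovered so far; after processing $E_j$, I merge it with $L$ and keep only the $s$ shortest. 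Because $|E_j| = O(s)$ and $|L| \le s$, each merge costs $O(s \log s)$ time and $O(s)$ space, which is dominated by the $O(n \log s)$ cost of the batch. Summed over $\lceil n / s \rceil$ batches, the total running time is $O((n/s) \cdot n \log s) = O(n^2 \log s / s)$, and the workspace is $O(s)$.

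There are two subtleties. First, every edge $uv \in \RNG(V)$ with both endpoints in $V$ is returned twice by the sweep (once from the batch containing $u$ and once from the batch containing $v$). I would resolve this with a canonical rule: when processing batch $Q_j$, only keep an edge $uv$ (with $u \in Q_j$) if $u$ has smaller index in $V$ than $v$, or if $v$ belongs to an earlier batch already processed and its index was larger than $u$'s; any simple tie-breaking that charges each edge to exactly one batch works. Second, because $V$ is in general position all RNG edge lengths are distinct, so the strict inequality ``length exceeds $|e_{i-1}|$'' cleanly separates the already-reported edges from the remaining ones, and the $s$ shortest surviving edges are exactly $e_i, \dots, e_{i+s-1}$ (or $e_i, \dots, e_m$ if $i + s - 1 > m$, detected when the sweep finishes with $|L| < s$).

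The main obstacle is bookkeeping, not complexity: we must ensure that the running list $L$ contains no duplicates and that the filter by $|e_{i-1}|$ is applied consistently across batches so that no edge of length $> |e_{i-1}|$ is missed and no edge is counted twice. Once the canonical rule above is in place, correctness follows because Lemma~\ref{lem:6neighbors} guarantees that every edge of $\RNG(V)$ incident to $Q_j$ appears in $E_j$; hence every edge of $\RNG(V)$ appears in exactly one $E_j$ after deduplication, and the global top-$s$ of the qualifying edges is computed correctly by the running merge.
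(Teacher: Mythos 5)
Your proposal is correct and follows essentially the same route as the paper: enumerate all edges of $\RNG(V)$ by repeated application of Lemma~\ref{lem:6neighbors} over $O(n/s)$ batches, keep only edges longer than $e_{i-1}$, and maintain the $s$ shortest survivors, with the total cost dominated by the $O(n^2\log s/s)$ edge generation. The only cosmetic difference is that the paper maintains the running top-$s$ via the Chan--Chen trick (linear-time selection on an array that is pruned whenever it exceeds $2s$ elements) rather than your sorted merges; both variants are absorbed by the generation cost, and your explicit deduplication rule handles a detail the paper leaves implicit.
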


\begin{proof}
	The algorithm in Theorem~\ref{thm:rng} generates all edges of $\RNG(S)$
	in $O \big( (n^2/s) \log s \big)$ time. As we have seen, each step of 
	this algorithm produces a batch of $O(s)$ edges of $\RNG(S)$, using
	Lemma~\ref{lem:6neighbors}. Now after each step of this algorithm,
	instead of reporting the edges, we select the edges 
	$e_{i}, \dots, e_{i + s - 1}$ among them, and we store these edges 
	in the workspace.
	This can be done with a trick by Chan and Chen~\cite{ChanCh07}:
	when the algorithm produces $O(s)$ new edges of $\RNG(S)$, 
	we store the edges that are longer than $e_{i-1}$ in an array $A$ 
	of size $O(s)$. Whenever $A$ contains more than $2s$ elements, 
	we use a linear time selection
	procedure to remove all the edges of rank larger than 
	$s$~\cite{CormenLeRiSt09}. 
	This needs $O(s)$ operations for each batch in the algorithm 
	of Theorem~\ref{thm:rng},
	giving a total of $O(n)$ time for selecting the edges. In the 
	end, we have 
	$e_{i}, \dots, e_{i + s - 1}$ in $A$, albeit not in sorted order. 
	Thus, we sort the final $A$ in $O(s\log s)$ time. 
	The running time for selecting the edges and sorting them is 
	dominated by the time needed to compute all the edges of $\RNG(S)$.
	The space usage for generating the edges and also for selecting 
	and sorting
	them is bounded by $O(s)$ cells of workspace. Thus, the claim follows.
\end{proof}

\section{A Simple Time-Space Trade-Off for EMST}
\label{sec:simple-emst}

The algorithm in Theorem~\ref{thm:rng} for producing edges of $\RNG(S)$,
together with the techniques from the constant-workspace algorithm 
by Asano~\etal~\cite{AsanoMuRoWa11} described in Section~\ref{sec:def-emst}, 
leads to a simple time-space trade-off for computing $\EMST(S)$ 
that we will explain now.

\subsection{Structure of Face-Cycles}
Recall from Section~\ref{sec:def-rng} 
that a partial relative neighborhood graph $\RNG_i$ 
is represented as a collection of face-cycles. 
As described in Section~\ref{sec:def-emst}, 
Asano~\etal~\cite{AsanoMuRoWa11}  have observed that,
to run Kruskal's algorithm on $\RNG(S)$,
it suffices to know the structure of
the face-cycles of $\RNG_i$, for $i \in \{1, \dots, m\}$.
The following observation makes this precise.

\begin{observation}\label{obs:face_cycle}
	Let $i \in \{1, \dots, m\}$.
	The edge $e_i \in E_R$ does not belong to $\EMST(S)$ if and only
	if there is a face-cycle $F$ in $\RNG_i$ such that both endpoints
	of $e_i$ lie on $F$.
\end{observation}

\begin{proof}
	Let $p$ and $q$ be the endpoints of $e_i$. If there is a face-cycle $F$ in $\RNG_i$ that 
	contains both $p$ and $q$, then $e_i$ clearly does not belong to $\EMST(S)$;
	see Figure~\ref{fig:endpoints-face-cycles-a}.
	Conversely, suppose that $e_i$ does not belong to $\EMST(S)$ and 
    hence $p$ and $q$ lie in the same 
    component of $\RNG_i$. Since $e_i$ does not belong to $\RNG_i$, 
    and since $\RNG(S)$ is plane, there is a face $\Gamma$ of 
    $\RNG_i$ such that $e_i \subset \Gamma$. Thus, $p$ and $q$ lie 
    on the boundary $\partial \Gamma$ of $\Gamma$ and in fact, since $p$ and $q$
    are
    in the same component of $\RNG_i$, they lie in the same 
    component  $F$ of $\partial \Gamma$. Then, $F$ is a face-cycle 
    that contains both $p$ and $q$;
	see Figure~\ref{fig:endpoints-face-cycles-b}.
\end{proof}

\begin{figure}[ht]
	\centering
	\subcaptionbox{\label{fig:endpoints-face-cycles-a}}{\includegraphics[page=1]{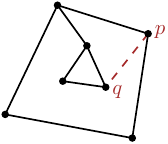}}
	\hspace{1.5cm}
	\subcaptionbox{\label{fig:endpoints-face-cycles-b}}{\includegraphics[page=2]{Endpoints-Face-Cycles}}
	\caption{A schematic drawing of $\RNG_i$. (a) The edge $pq \not \in \EMST(S)$ since
		there is a face-cycle that both $p$ and $q$ lie on.
		(b) If $p$ and $q$ were in the same connected component 
		of $\RNG_i$, but if there were no face-cycle that 
		contains both of them,
		then $e_i$ would cross an edge of $\RNG_i$, 
		contradicting the planarity of $\RNG(S)$.}
	\label{fig:endpoints-face-cycles}
\end{figure}

Observation~\ref{obs:face_cycle} tells us that we can 
identify edges of $\EMST(S)$ if we can determine for each 
$e_i$ the face-cycles in $\RNG_i$ 
that contain the endpoints of $e_i$,
for $i \in \{1, \dots, m\}$.
To accomplish this task, we use the next lemma
to traverse the face-cycles.

\begin{lemma}\label{lem:face_traversal-next}
	Let $i, j \in \{1, \dots, m\}$ and $i > j$.
	Suppose we are given the length $|e_i|$ of $e_i \in E_R$,  a half-edge
	$\overrightarrow{e_j}$ of $e_j \in E_R$ and the
	edges incident to the head of $\overrightarrow{e_j}$ in $\RNG(S)$
	(there are at most six such edges).
	Let $F$ be the face-cycle of $\RNG_i$ that $\overrightarrow{e_j}$ lies on. 
	We can find the next half-edge of $\overrightarrow{e_j}$
	on $F$ in $O(1)$ time using $O(1)$ cells of workspace.
\end{lemma}

\begin{proof}
	Let $\overrightarrow{f_j}$ be the next half-edge of 
    $\overrightarrow{e_j}$ on $F$. Let $q$ be the head of $\overrightarrow{e_j}$. 
	By comparing the length of the edges incident to $q$ in $\RNG(S)$ with $|e_{i}|$, 
	we identify the ones that appear in
	$\RNG_{i}$, in $O(1)$ time. Then, among them we pick the
	half-edge $\overrightarrow{f_j}$ which has the 
	smallest clockwise angle with $\overrightarrow{e_j}$ 
	around $q$ and has $q$ as its tail. This takes $O(1)$ time
	using $O(1)$ cells of workspace; see Figure~\ref{fig:next-edge}.
\end{proof}

\begin{figure}[ht]
	\centering
	\includegraphics[page=1]{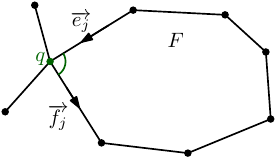}
	\caption{For $i>j$, a schematic drawing of a face-cycle $F$ of $\RNG_i$, and
		$\protect\overrightarrow{e_j}$ on $F$ with the head
		vertex $q$, as well as the other edges of $\RNG_i$ incident to $q$.
		The edge $\protect\overrightarrow{f_j}$
		which has the smallest clockwise angle with $\protect\overrightarrow{e_j}$
		is the next edge of $\protect\overrightarrow{e_j}$ on $F$.}
	\label{fig:next-edge}
\end{figure}

\begin{lemma}\label{lem:face_traversal-pre-suc}
	Let $i, j \in \{1, \dots, m\}$ and $i \leq j$.
    Suppose we are given the length $|e_i|$ of $e_i \in E_R$,  a 
    half-edge $\overrightarrow{e_j}$ of $e_j \in E_R$ and 
	the
	edges incident to the head of $\overrightarrow{e_j}$ in $\RNG(S)$
	(there are at most six such edges).
	We can find $\text{pre}(\overrightarrow{e_j})$ 
    and $\text{suc}(\overrightarrow{e_j})$ in $\RNG_i$
	in $O(1)$ time using $O(1)$ cells of workspace.
\end{lemma}

\begin{proof}
	Let $q$ be the head of $\overrightarrow{e_j}$.
	By comparing the length of the edges incident to $q$ in $\RNG(S)$
	with $|e_{i}|$, we identify the incident half-edges of $q$ in
	$\RNG_{i}$ in $O(1)$ time. Then, among them we pick the
	half-edge $\text{pre}(\overrightarrow{e_j})$ which has $q$ as its head and
	makes the smallest counterclockwise angle with $\overrightarrow{e_j}$ 
	around $w$. Similarly, we pick the half-edge $\text{suc}(\overrightarrow{e_j})$
	which has $q$ as its tail and makes the smallest clockwise angle with $\overrightarrow{e_j}$. 
	This takes $O(1)$ time
	using $O(1)$ cells of workspace; see Figure~\ref{fig:pre-suc-edge}.
\end{proof}

\begin{figure}[ht]
	\centering
	\includegraphics[page=2]{Following-Edge}
	\caption{For $i \leq j$, a schematic drawing of $\RNG_i$ (in black) 
		and a half-edge $\protect\overrightarrow{e_j}$ with head $q$. 
		The half-edge $\text{suc}(\protect\overrightarrow{e_j})$ 
        has the smallest clockwise angle with $\protect\overrightarrow{e_j}$.
      		The half-edge $\text{pre}(\protect\overrightarrow{e_j})$ 
        has the smallest counterclockwise angle with $\protect\overrightarrow{e_j}$.}
	\label{fig:pre-suc-edge}
\end{figure}

\subsection{The Algorithm}
From our observations so far, we can
derive a simple time-space trade-off for computing
$\EMST(S)$. In Theorem~\ref{thm:simple_algo},
we simulate Kruskal's algorithm on $\RNG(S)$.
For this, we take batches of $s$ edges of $\RNG(S)$,
and we report the edges of $\EMST(S)$ among them. 
To determine whether an edge $e_i$ of $\RNG(S)$ is in $\EMST(S)$,
we apply Observation~\ref{obs:face_cycle}, i.e., 
we determine whether the endpoints of $e_i$ are
on a common face-cycle in the corresponding $\RNG_i$.

\begin{theorem}\label{thm:simple_algo}
	Let $S$ be a planar set of $n$ point-sites in general position
	stored in a read-only array. Let $s \in \{1, \dots, n\}$
	be a parameter. We can output all the edges of $\EMST(S)$,
	in sorted order of their length, in $O\big( (n^3/s)\log s \big)$ 
	time using $O(s)$ cells of workspace.
\end{theorem}

\begin{proof}
	Let $E_R = e_1, \dots, e_m$ be the sequence of edges of $\RNG(S)$, sorted by length. 
	In the first iteration, we use Lemma~\ref{lem:gen_edges} to find the batch 
	$e_1, \dots, e_s$ of the first $s$ edges in $E_R$ in 
	$O\big( (n^2/s)\log s \big)$ time.
	For each edge $e_i$, $i\in \{1, \dots, s\}$, we consider 
	both its half-edges. Then, we perform $2s$ parallel walks starting from 
	the head vertex of each half-edge $\overrightarrow{e_i}$.
	In the first step of the walks,  using Lemma~\ref{lem:6neighbors},
	we find the incident edges to the head of each half-edge 
	$\overrightarrow{e_i}$ (there are at most six such edges).
	Then, using Lemma~\ref{lem:face_traversal-pre-suc}, we identify 
	$\text{pre}(\overrightarrow{e_i})$ and $\text{suc}(\overrightarrow{e_i})$
	in $\RNG_i$ (if they exist). 
	By following the successor of each half-edge, we perform
	one step of the walk for each half-edge of the batch in parallel.
	Note that the walk that starts from the head of $\overrightarrow{e_i}$ takes
	place in $\RNG_i$.
	
	Next, in the second step of the parallel walks, we consider the head vertices
	of all the $\text{suc}(\overrightarrow{e_i})$.
	First, we use Lemma~\ref{lem:6neighbors}
	to find the incident edges to the head of each 
    $\text{suc}(\overrightarrow{e_i})$ (there are at most six such edges).
	Then, applying Lemma~\ref{lem:face_traversal-next}, we find the next half-edge of
    $\text{suc}(\overrightarrow{e_i})$,
	and we advance each half-edge along 
	its face-cycle in $\RNG_i$ as one step of the parallel walks.
	We proceed the parallel walks by finding the next edge on the face-cycles in
	each step. 
	
	A walk that started from the head $q$ of $\overrightarrow{e_i}$ continues until 
	it either encounters the tail $p$ of  $\overrightarrow{e_i}$ or
	until it arrives at $\text{pre}(\overrightarrow{e_i})$.
	In the former case, we have found a face-cycle that both
	endpoints of $e_i$ lie on and thus, by Observation~\ref{obs:face_cycle}, 
	$e_i$ is not in $\EMST(S)$; see Figure~\ref{fig:parallel-walks-a}. 
	In the latter case, there is no face-cycle in $\RNG_i$ that contains both $p$ 
	and $q$. This is because, by definition of $\text{pre}(\overrightarrow{e_i})$
	and $\text{suc}(\overrightarrow{e_i})$, all the incident edges of $q$ in $\RNG_i$ lie in
	the counterclockwise cone between 
    $\text{pre}(\overrightarrow{e_i})$ and $\text{suc}(\overrightarrow{e_i})$
	around $q$. Therefore, by planarity of $\RNG_i$,
	all the other face-cycles that contain $q$ are separated from $p$ 
    by the face-cycle that
	starts with $\text{suc}(\overrightarrow{e_i})$ and ends at 
    $\text{pre}(\overrightarrow{e_i})$.
	Hence, none of those face-cycles encounters $p$ and, 
    by Observation~\ref{obs:face_cycle}, $e_i$ is an edge of 
    $\EMST(S)$; see Figure~\ref{fig:parallel-walks-b}.
	In this case, we report 
	$e_i$, and we also abort the walk that was started from the opposite half-edge 
	of $\overrightarrow{e_i}$. This prevents an edge of $\EMST(S)$ to be reported twice.
	
	\begin{figure}[ht]
		\centering
		\subcaptionbox{\label{fig:parallel-walks-a}}{\includegraphics[page=1]{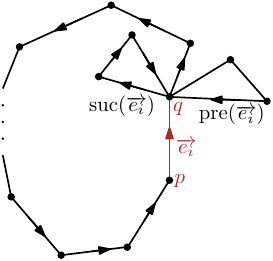}}
		\hspace{1.5cm}
		\subcaptionbox{\label{fig:parallel-walks-b}}{\includegraphics[page=2]{Parallel-Walks}}
		\caption{A schematic drawing of $\RNG_i$ and the half-edge $\protect\overrightarrow{e_i}$
			with head $q$ and tail $p$. (a) The vertices $p$ and $q$ are on the same 
			face-cycle of $\RNG_i$ since by traversing the face-cycle starting from
			$\text{suc}(\protect\overrightarrow{e_i})$ we encounter $p$.
			(b) The vertices $p$ and $q$ are on different face-cycles of $\RNG_i$ since
			by traversing the face-cycle starting from
            $\text{suc}(\protect\overrightarrow{e_i})$ we encounter
			$\text{pre}(\protect\overrightarrow{e_i})$, meaning that we will not reach $p$.
		}
		\label{fig:parallel-walks}
	\end{figure}
	
	In the next iteration of the algorithm, we again use 
	Lemma~\ref{lem:gen_edges} to find the next batch of $s$ edges 
	in $E_R$. Similarly, we perform $2s$ parallel walks for the half-edges in 
	this batch, in order to find the edges that belong to $\EMST(S)$.
	
	Since there are $O(n)$ half-edges in $\RNG(S)$, it takes $O(n)$ steps
	in each iteration to conclude all the walks, where each step of the walks takes
	$O(n\log s)$ time. It follows that we can process a single batch of edges in $O(n^2\log s)$ 
	time which dominates the time needed for finding a batch of $s$ edges of $\RNG(S)$.
	We have $O(n/s)$ batches, so the total running time of the 
	algorithm is $O\big( (n^3/s)\log s \big)$.
	The algorithm uses $O(s)$ cells of workspace for finding and storing 
	a batch of $s$ edges as well as a constant number of cells 
	per edge to perform each walk.
\end{proof}

Note that, in this algorithm, it is not essential to process edges 
of $\RNG(S)$ in sorted order of length. Thus, we can simply apply 
Lemma~\ref{lem:6neighbors} to produce edges of $\RNG(S)$.
However, by using Lemma~\ref{lem:gen_edges} we are able to
report edges of $\EMST(S)$ in sorted order of length,
although the total running time of the algorithm will not be affected.

\section{Improvement via a Compact Representation of RNGs}
\label{sec:s-net-emst}

Theorem~\ref{thm:simple_algo} is clearly not optimal:
for the case of linear space $s = n$, we get a running time 
of $O(n^2 \log n)$, 
although we know that it is possible to find $\EMST(S)$ 
in $O(n \log n)$ time.
Can we do better?
The bottleneck in Theorem~\ref{thm:simple_algo} is the time needed to
perform the walks in the partial relative neighborhood
graphs $\RNG_i$. In particular, such a walk might take 
$\Omega(n)$ steps, leading to a running time of
$\Omega(n^2 \log s)$ for processing a single batch of $s$ edges.
To avoid this, we will maintain a compressed representation
of the partial relative neighborhood graphs that allows
us to reduce the number of steps in each walk to $O(n/s)$.

\subsection{The  $s$-net Structure}
Let $i \in \{1, \dots, m\}$. 
An \emph{$s$-net} $N$ for $\RNG_i$ is a collection of half-edges, 
called \emph{net-edges}, in 
$\RNG_i$ that has the following two properties:
(i) Each face-cycle in $\RNG_i$ with at least $\lfloor n/s \rfloor + 1$
half-edges contains at least one net-edge. (ii) For any
net-edge $\overrightarrow{e} \in N$, let $F$ be the face-cycle of 
$\RNG_i$ that contains $\overrightarrow{e}$. Then on $F$, between the head of 
$\overrightarrow{e}$ and the tail of the next net-edge, there are
at least $\lfloor n/s \rfloor$ and at most 
$2 \lfloor n/s \rfloor$ other half-edges.
Note that the next net-edge on $F$ after $\overrightarrow{e}$
could possibly be $\overrightarrow{e}$ itself.
In particular, this implies that face-cycles with less than 
$\lfloor n/s \rfloor$ edges contain no net-edges;
see Figure~\ref{fig:s-net}.

\begin{figure}[ht]
	\centering
	\subcaptionbox{\label{fig:s-net-a}}{\includegraphics[page=1]{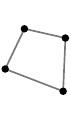}}
	\hspace{1cm}
	\subcaptionbox{\label{fig:s-net-b}}{\includegraphics[page=2]{s-net}}
	\hspace{1cm}
	\subcaptionbox{\label{fig:s-net-c}}{\includegraphics[page=3]{s-net}}
	\caption{A schematic drawing of an $s$-net for $\RNG_i$.
		(a) A small face-cycle with no net-edges. (b) a face-cycle with
		more than $\lfloor n/s \rfloor$ and less than $2 \lfloor n/s \rfloor$
		half-edges which contains one net-edge.
		(c) A big face-cycle with four net-edges.}
	\label{fig:s-net}
\end{figure}

We note two important observations about $s$-nets.

\begin{observation}\label{obs:s-net}
	Let $i \in \{1, \dots, m\}$, and let $N$ be an $s$-net
	for $\RNG_i$. Then,
	\begin{enumerate}[({N}1)]
		\item
		$N$ has $O(s)$ half-edges; and
		\item
		let $\overrightarrow{f}$ be a half-edge of $\RNG_i$, and
		let $F$ be the face-cycle that contains it. Then, it takes
		at most $2\lfloor n/s \rfloor$ steps along $F$ from the head 
		of $\overrightarrow{f}$ until we encounter the tail of 
		either a net-edge or $\overrightarrow{f}$ itself.
	\end{enumerate}
\end{observation}

\begin{proof}
	Property~(ii) of the definition of an $s$-net
	implies that only face-cycles of $\RNG_i$ with
	at least $\lfloor n/s \rfloor+1$ half-edges contain net-edges. Furthermore, on these
	face-cycles, we can uniquely charge $\Theta(n/s)$ half-edges to each
	net-edge, again by property~(ii). Since
	the face-cycles of $\RNG_i$ have $O(n)$ half-edges in total, 
	we obtain the first observation which says $|N| = O(s)$.
	
	For the second observation, we first note that if $F$ contains less
	than $2 \lfloor n/s \rfloor$ half-edges, the claim holds trivially. 
	Otherwise, by property~(i), $F$ contains at least one net-edge. 
	From property~(ii) it follows that there are at most $2 \lfloor n/s \rfloor$
	half-edges between every two consecutive net-edges on $F$. Thus, in a walk 
	on $F$ starting from $\overrightarrow{f}$, we reach a net-edge
	in at most $2 \lfloor n/s \rfloor$ steps.
\end{proof}

Due to statement~(\emph{N1}) of Observation~\ref{obs:s-net}, an
$s$-net can be stored in $O(s)$ cells of workspace. This makes the
concept of $s$-net useful in our algorithm with a workspace of $O(s)$ cells.
Therefore, we can exploit the $s$-net in order to speed up the
processing of a single batch. The next lemma shows how
this is done.

\begin{lemma}\label{lem:net_batch}
	Let $i \in \{1, \dots, m\}$. Suppose we are given 
	$E_{i,s}= e_{i}, \dots, e_{i + s - 1}$, a batch of $s$ edges in $E_R$.
	Furthermore, we have an $s$-net $N$ for $\RNG_i$ in our 
	workspace. Then, we can determine which edges from
	$E_{i,s}$ belong to $\EMST(S)$ in
	$O\big( (n^2/s)\log s\big)$ time using $O(s)$ cells of workspace.
\end{lemma}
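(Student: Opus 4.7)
The plan is to simulate Kruskal's algorithm on the batch $E_{i,s}$, using the $s$-net $N$ to identify the face-cycle membership of each batch-edge endpoint quickly. As a first step, I invoke Lemma~\ref{lem:gen_edges} to enumerate $e_{i},\dots,e_{i+s-1}$ in sorted order; this alone already costs $O(n^{2}\log s/s)$ time and $O(s)$ words, so the rest of the algorithm is allowed a comparable budget.

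Next, for each of the at most $2s$ batch-edge endpoints I compute a compact representation of the (at most six) face-cycles of $\RNG_{i}$ incident to it. Concretely, every endpoint spawns up to six face-cycle walks in $\RNG_{i}$, one per incident half-edge, and all of the $O(s)$ walks are advanced synchronously. At each parallel step I collect the head vertices of the currently active half-edges and feed them into a single application of Lemma~\ref{lem:6neighbors} at cost $O(n\log s)$; each half-edge is then advanced along its face-cycle by Lemma~\ref{lem:face_traversal} in constant additional time. By property~(N2) of Observation~\ref{obs:s-net}, every walk halts within $2\lfloor n/s\rfloor$ steps, either upon hitting a net-edge of $N$ (whose attached face-cycle label I simply read off) or upon closing back onto its starting half-edge (in which case the traversed face-cycle has fewer than $2\lfloor n/s\rfloor$ half-edges, and I label it canonically, say, by the smallest-index vertex seen along the walk). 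The whole step thus runs in $O(n/s)\cdot O(n\log s)=O(n^{2}\log s/s)$ time in $O(s)$ words and yields, for every endpoint, its (at most six) face-cycle labels.

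Finally, I simulate Kruskal's on the batch using a union-find of size $O(s)$ keyed by the labels above. I first merge, within each endpoint's own label set, all of its face-cycles (they share the endpoint and hence lie in the same component). Then I process $e_{i},\dots,e_{i+s-1}$ in increasing order: for each $e_{j}=u_{j}v_{j}$ I query $\mathrm{find}$ on any label of $u_{j}$ and of $v_{j}$, output $e_{j}$ as an $\EMST$ edge and $\mathrm{union}$ the two classes precisely when the two roots differ. Correctness follows from Observation~\ref{obs:face_cycle}: the union-find state right before processing $e_{j}$ is exactly the face-cycle/component partition of $\RNG_{j}$, because batch edges that are not added to the $\EMST$ lie within a single existing component and therefore leave the partition unchanged. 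The union-find work is $O(s\,\alpha(s))$ and is absorbed by the earlier bounds.

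\textbf{Expected obstacle.} The delicate piece is the second step, where the labeling of small face-cycles (those not hit by any net-edge) must be \emph{canonical}, so that two endpoints sitting on a common small face-cycle are guaranteed to receive equal labels even when their walks start from different incident half-edges. Choosing the smallest-index vertex along the walk works for walks traversing exactly the same cycle, but extra care is required when two small face-cycles of the same small component are probed from different endpoints; in that case one either augments $N$ at construction time with a pre-computed canonical identifier for each small face-cycle, or performs, the first time such a face-cycle is encountered, a bounded extra traversal of the surrounding small component. Either option fits within the $O(n^{2}\log s/s)$ budget claimed by the lemma.
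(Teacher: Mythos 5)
Your overall strategy (parallel face-cycle walks truncated after $O(n/s)$ steps via property~(N2), followed by a union--find simulation of Kruskal on the batch) matches the paper's, but you replace the paper's explicit auxiliary graph $H$ by per-endpoint \emph{face-cycle labels}, and this is where a genuine gap appears. A single big face-cycle of $\RNG_i$ contains \emph{many} net-edges (one every $\Theta(n/s)$ half-edges, by property~(ii) of the $s$-net), and the $s$-net as defined is just a set of half-edges: it carries no face-cycle identifiers. So when the walk from $u_j$ hits net-edge $\nu$ and the walk from $v_j$, started elsewhere on the \emph{same} face-cycle, hits a different net-edge $\nu'$, the two endpoints receive distinct labels, your union--find never merges them, and you wrongly report $e_j$ as an $\EMST$ edge. ``Reading off the attached face-cycle label'' of a net-edge is exactly the information you do not have. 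To repair this you must also chain together the net-edges lying on a common face-cycle, e.g.\ by walking from every net-edge to the next relevant half-edge and unioning along the way --- which is precisely what the paper does: it sets $F = N \cup \{\text{successors of the batch-edges}\}$, walks from every half-edge of $F$ until the tail of the next one, and records the outcomes as \emph{compressed edges} of an auxiliary graph $H$ whose connected components then play the role of your labels.

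Two further remarks. First, the obstacle you flag (canonical labels for small face-cycles, and identification of distinct small face-cycles within one component) is actually the harmless part: the smallest-vertex rule is well defined per face-cycle, and by the arc decomposition underlying Observation~\ref{obs:face_cycle} one never needs to merge two face-cycles of the same component unless they share a batch-edge endpoint or become linked by an inserted batch-edge --- this is exactly why the paper can afford an $H$ that ``may have a lot more components than $\RNG_i$.'' Second, the paper's proof deliberately produces the graph $H$ with the batch-edges inserted, because Lemma~\ref{lem:new_net} consumes it to construct the $s$-net for $\RNG_{i+s}$; a purely label-based variant would still have to output an equivalent structure for the net update. Your time analysis ($O(n/s)$ synchronized rounds of Lemma~\ref{lem:6neighbors} at $O(n\log s)$ each, plus $O(s\log s)$ for the union--find phase) is consistent with the claimed bound.
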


\begin{proof}
	Let $H$ be a set of half-edges defined as 
	follows: the set $H$ is the union of 
	all net-edges from $N$, and, for each \emph{batch-edge}
	$e_j \in E_{i,s}$, the successors of the two half-edges of $e_j$
	in $\RNG_i$; see Figure~\ref{fig:set-H}.
	
	\begin{figure}[ht]
		\centering
		\includegraphics[page=1]{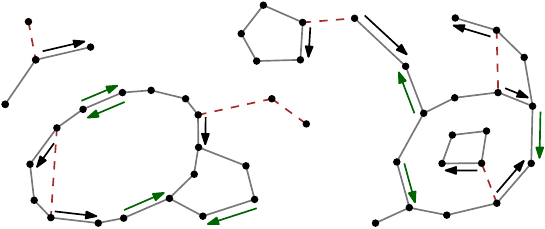}
		\caption{A schematic drawing of $\RNG_i$ with a batch of edges 
			in $E_R$ (dashed red segments). The directed segments
			represents the half-edges in $H$. The net-edges are in green and the successors of
			the batch edges are in black.}
		\label{fig:set-H}
	\end{figure}
	
	By definition, we have $|H| = O(s)$, and thus it takes $O(n \log s)$ time
	to compute $H$. This is done by using Lemma~\ref{lem:6neighbors}
	to find the incident edges of the head of each $e_j$ and
	Lemma~\ref{lem:face_traversal-pre-suc} to identify the successors
	of each $e_j$.
	
	Now starting from the half-edges in $H$, we perform 
	parallel walks through the face-cycles of $\RNG_i$, 
	one walk per half-edge. Each such walk
	proceeds until it encounters the tail of a half-edge in $H$ 
	(including the starting half-edge itself).
	In each step of these walks, we use
	Lemma~\ref{lem:6neighbors} and Lemma~\ref{lem:face_traversal-next} 
	to find the next half-edges on the face-cycles in $O(n\log s)$ time,
	and then we check whether these new half-edges belong to $H$ in $O(s\log s)$ time.
	Because $H$ contains the net-edges of $N$, by statement~(\emph{N2}) of
	Observation~\ref{obs:s-net}, 
	each walk finishes after $O(n/s)$ steps, and thus,
	the total time for this procedure is
	$O \big( (n^2/s) \log s \big)$.
	
	Next, we build an auxiliary \emph{undirected} (multi-)graph $G$ as follows:
	the vertices of $G$ are the endpoints of the half-edges in
	$H$ and the endpoints of the half-edges of $E_{i,s}$.\footnote{Not 
	all the endpoints of half-edges in $E_{i,s}$ are 
	necessarily included as endpoints of half-edges in $H$: 
	the successor of a half-edge from $E_{i,s}$ might be
	\emph{Null}. In this case, we still want to include the
	endpoints of this half-edge in $G$.}
	Furthermore, $G$ contains 
	undirected edges for all the
	half-edges in $H$
	and additional \emph{compressed edges}, that represent the 
	outcomes of the walks: if a walk
	started from the head $q$ of a half-edge in $H$ and ended at the
	tail $p$ of a half-edge in $H$, we add an edge from $q$ to $p$ in
	$G$, and we label it with the number of steps that were needed
	for the walk, i.e., the number of half-edges between $q$ and $p$ on that face-cycle.
	Thus, $G$ contains \emph{$H$-edges},
	and \emph{compressed edges}; 
	see Figure~\ref{fig:emst-graph-G}.
	Clearly, after all the walks have been terminated,
	we can construct $G$ in $O(s)$ time, using $O(s)$ cells of workspace. 
	
	\begin{figure}[ht]
		\centering
		\includegraphics[page=2]{graph-G}
		\caption{The auxiliary graph $G$ is shown. The edge set of $G$ contains
			the net-edges (in green), the successors of
			batch-edges (in black), and the compressed edges (beige paths).}
		\label{fig:emst-graph-G}
	\end{figure}
	
	The auxiliary graph $G$ is actually a representation of the face-cycles
	in $\RNG_i$. Thus, by adding the batch-edges of $E_{i,s}$ 
	one by one into $G$, we can represent the next partial relative neighborhood
	graphs, up to $\RNG_{i+s}$. Hence, we can use $G$ to identify which of the 
	batch-edges of $E_{i,s}$ belong to $\EMST(S)$. 
	This is done by applying Kruskal's algorithm on $G$ as follows:
	we determine the connected components of $G$ in $O(s)$ time using
	depth-first search. Then, we insert the batch-edges into $G$, 
	one after another, in sorted order. As we do this,
	we keep track of how the connected components of $G$ change, using
	a union-find data structure~\cite{CormenLeRiSt09}. Whenever
	a new batch-edge connects two distinct connected components of $G$,
	we output it as an edge of $\EMST(S)$. Otherwise, we do nothing; 
	see Figure~\ref{fig:batch-edges-graph-G}.
	Note that even though one component of $\RNG_i$ might
	be represented by several components in $G$,\footnote{Two 
	(or several) face-cycles in one 
	component of $\RNG_i$ may share some vertices. However, these 
	vertices need not necessarily appear as vertices in $G$. Hence,
	representing those face-cycles with compressed edges, one might not 
	represents their common parts in $G$. 
	Therefore, such face-cycles might belong to distinct components 
	in $G$.},
	the algorithm is still correct because of 
	Observation~\ref{obs:face_cycle}.
	
	This execution of Kruskal's algorithm and updating the structure of
	connected components of $G$ takes $O(s \log s)$ time, which
	is dominated by the running time of $O\big((n^2/s)\log s\big)$
	to perform the parallel walks. The space requirement for constructing
	and storing the set $H$ and the graph $G$ as well as the updated
	versions of $G$ is a total of $O(s)$ cells of workspace.
\end{proof}

\begin{figure}[ht]
	\centering
	\includegraphics[page=3]{graph-G}
	\caption{The batch-edges of $E_{i,s}$ (in red) have been added to the auxiliary graph $G$.}
	\label{fig:batch-edges-graph-G}
\end{figure}

\subsection{Maintaining the $s$-net}
Now that we have described how to use an $s$-net for $\RNG_i$
in order to process the edges $e_i, \dots, e_{i+s}$ of $E_R$, 
we need to explain how to maintain the $s$-net during the
algorithm, i.e., how to construct an $s$-net for $\RNG_{i+s}$
after processing the edges $e_i, \dots, e_{i+s}$.
The algorithm in the following lemma computes an $s$-net for
$\RNG_{i+s}$, provided that we have an $s$-net for $\RNG_i$ as well as
the graph $G$ as it is constructed in the proof of Lemma~\ref{lem:net_batch},
for each $i \in \{1, \dots, m\}$.

\begin{lemma}\label{lem:new_net}
	Let $i \in \{1, \dots, m\}$, and suppose we have the graph
	$G$ derived from $\RNG_i$ as above, such that all batch-edges have been 
	inserted into $G$. Then, we can compute an $s$-net $N$ for
	$\RNG_{i+s}$ in time $O\big( (n^2/s) \log s\big)$, using $O(s)$ cells of workspace.
\end{lemma}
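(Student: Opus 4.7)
The plan is to use the compressed graph $H$ to enumerate the face-cycles of $\RNG_{i+s}$ implicitly and to place the new net-edges at the required spacing along these cycles. Since $H$ inherits a plane embedding from $\RNG_i$ (with each compressed edge drawn along the path of $\RNG_i$ that it represents) and contains all batch-edges, the faces of $H$ correspond to the face-cycles of $\RNG_{i+s}$: traversing the boundary of a face of $H$ and summing $1$ for each $F$-edge or batch-edge and the stored label for each compressed edge yields the exact number of half-edges on the corresponding face-cycle in $\RNG_{i+s}$. I would first establish this correspondence carefully, respecting the half-edge orientations, noting that each batch-edge bounds two different face-cycles in $\RNG_{i+s}$ (one per direction).

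Next, I would walk around every face of $H$ in $O(s)$ total time, since $|E(H)| = O(s)$ and each half-edge of $H$ is visited a constant number of times. For each face whose computed length in $\RNG_{i+s}$ is smaller than $\lfloor n/s \rfloor + 1$, nothing is done, since the $s$-net does not place any net-edge there. For each remaining face $\phi$, I would mark the target offsets along $\phi$ at which net-edges must be placed, distributed so that consecutive net-edges are between $\lfloor n/s \rfloor$ and $2 \lfloor n/s \rfloor$ half-edges apart. Each such target offset falls either on an $F$-edge or a batch-edge of $H$ (in which case the corresponding oriented half-edge is declared a net-edge immediately) or strictly inside some compressed edge of $H$.

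The offsets that fall inside compressed edges are resolved in a final walking phase. For each compressed edge involved, I would walk along its realizing path in $\RNG_i$, starting from the recorded source half-edge and advancing using Lemma~\ref{lem:face_traversal}; once the cumulative step count matches a required offset, the current half-edge is added to $N$. Executing all these walks in parallel, each round advances the $O(s)$ active heads, for which a single call to Lemma~\ref{lem:6neighbors} supplies all neighbors in $O(n \log s)$ time. Since every compressed edge has length $O(n/s)$, the walks terminate after $O(n/s)$ rounds, giving total time $O((n/s) \cdot n \log s) = O(n^2 \log s / s)$; workspace is dominated by $H$, the $O(s)$ active heads, and the at most $O(s)$ collected net-edges.

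The main obstacle I foresee is pinning down the bijection between the faces of $H$ (after batch-edges are inserted) and the face-cycles of $\RNG_{i+s}$ at the level of directed half-edges. One has to check that a batch-edge that splits a face of $\RNG_i$ yields two different faces of $H$ and that a batch-edge that merges two faces of $\RNG_i$ yields a single face of $H$ whose traversal correctly concatenates the two previous cycles via the batch-edge. Once this correspondence is verified, property~(i) of the new $s$-net follows from the fact that any face-cycle of length at least $\lfloor n/s \rfloor + 1$ receives at least one placed net-edge, and property~(ii) follows directly from the spacing rule used when marking target offsets.
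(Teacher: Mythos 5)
Your proposal follows essentially the same route as the paper's proof: read off the big face-cycles of $\RNG_{i+s}$ from the faces of $H$ using the labels of the compressed edges, spread the new net-edge positions at spacing $\lfloor n/s\rfloor$ along these cycles, and then locate the positions falling inside compressed edges via $O(s)$ parallel walks using Lemmas~\ref{lem:6neighbors} and~\ref{lem:face_traversal} in $O(n^2\log s/s)$ total time. Your additional care about the correspondence between faces of $H$ and face-cycles of $\RNG_{i+s}$ only fills in details the paper asserts by construction, so the two arguments coincide.
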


\begin{proof}
	By construction, all \emph{big} face-cycles of $\RNG_{i+s}$, i.e., those 
	face-cycles with at least $\lfloor n/s \rfloor + 1$ half-edges, 
	appear as faces in $G$. Thus, by walking along 
	all faces in $G$, and taking into account the labels of the compressed
	edges, we can determine these big face-cycles in $O(s)$ time.
	The big face-cycles are represented through sequences of $H$-edges, compressed
	edges, and batch-edges. For each such sequence, we determine the positions
	of the half-edges for the new $s$-net $N$, by spreading the half-edges 
	equally at minimum distance $\lfloor n/s \rfloor$ and maximum distance
	$2 \lfloor n/s \rfloor$ along the 
	sequence, again taking the labels of the compressed edges into
	account. Since the compressed edges have length $O(n/s)$, for each of them,
	we create at most $O(1)$ new net-edges. Now that we have determined 
	the positions of the new net-edges on the face-cycles of $\RNG_{i+s}$,
	we perform $O(s)$ parallel walks in $\RNG_{i+s}$ to actually find them.
	Using Lemma~\ref{lem:6neighbors} and Lemma~\ref{lem:face_traversal-next}, 
	this takes $O\big( (n^2/s)  \log s \big)$ time; see Figure~\ref{fig:new-net-edges}.
\end{proof}

\begin{figure}[ht]
	\centering
	\subcaptionbox{\label{fig:new-net-edges-a}}{\includegraphics[page=1]{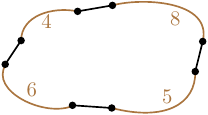}}
	\hspace{1cm}
	\subcaptionbox{\label{fig:new-net-edges-b}}{\includegraphics[page=2]{New-Net-Edges}}
	\caption{(a) A schematic drawing of a face-cycle in $G$ and (b) distributing the
		new net-edges (in green) on this face-cycle with almost equal distances.}
	\label{fig:new-net-edges}
\end{figure}

We now have all the ingredients for our main result that provides a smooth 
trade-off between the cubic-time algorithm in constant workspace and the classic
$O(n\log n)$-time algorithm with $O(n)$ cells of workspace. The following
theorem presents this algorithm.

\begin{theorem}\label{thm:main_algo}
	Let $S$ be a planar set of $n$ point-sites
	in general position stored in a read-only array. Let $s \in \{1, \dots, n\}$
	be a parameter. We can report all the edges of $\EMST(S)$,
	in sorted order of length, in $O\big( (n^3/s^2) \log s \big)$ 
	time using $O(s)$ cells of workspace.
\end{theorem}

\begin{proof}
	This follows immediately from our lemmas: 
	applying Lemma~\ref{lem:gen_edges}, we
	produce a batch of $s$ edges of $\RNG(S)$ in sorted order of length.
	Then, among them, we report the edges of $\EMST(S)$, using Lemma~\ref{lem:gen_edges}.
	Finally, we maintain the $s$-net structure to be used for the next
	batch of $s$ edges of $\RNG(S)$, by Lemma~\ref{lem:new_net}.
	All these steps are done in $O\big( (n^2/s) \log s \big)$ time using $O(s)$
	cells of workspace. Since $\RNG(S)$ has $O(n)$ edges, we need to process
	$O(n/s)$ batches of edges of $\RNG(S)$, leading to an
	algorithm with total running time of $O\big( (n^3/s^2) \log s \big)$,
	and total workspace usage of $O(s)$ cells.
\end{proof}

\section{Conclusion}

For our algorithm, it suffices to update the $s$-net every 
time that a new batch is considered. It is, however, possible 
to maintain the $s$-net and the auxiliary graph $G$ through 
insertions of single edges, with the same bound as in Lemma~\ref{lem:new_net}. 
This allows us to handle graphs constructed incrementally and 
to maintain their compact representation using $O(s)$ workspace cells.
We believe this is of independent interest and can be used 
by other algorithms for planar graphs in the limited-workspace model.

Also, it remains an intriguing question whether the EMST can be 
computed in $o(n^3)$ time in the constant-workspace model.
Intuitively, it seems hard to improve the $O(n^2)$-time algorithm for 
checking whether an individual edge belongs to the EMST, and maybe 
it will be possible to obtain a formal lower bound for this 
subproblem. However, even such a lower bound would not rule out 
other possible approaches towards a faster EMST-algorithm.

\paragraph{Acknowledgments.}
This work was initiated at the 
Fields Workshop on Discrete and Computational Geometry,
held 
07.31.--08.04.2017, at Carleton university.
The authors would like to thank them and
all the participants of the workshop for inspiring discussions
and for providing a great research atmosphere.

\bibliographystyle{abbrv}
\bibliography{thesis}

\end{document}